\def\BibTeX{{\rm B\kern-.05em{\sc i\kern-.025em b}\kern-.08em
    T\kern-.1667em\lower.7ex\hbox{E}\kern-.125emX}}
\definecolor{morange}{rgb}{0.8,0.2,0}
\definecolor{mblue}{rgb}{0,0,1.0}
\definecolor{mred}{rgb}{0.9,0.1,0.1}
\definecolor{mgreen}{rgb}{0 0.8 0.3}
\newtheorem{theorem}{Theorem}
\newtheorem{lemma}{Lemma}
\DeclareMathOperator*{\argmax}{arg\,max}
\begin{document}

\title{A Concentration-Time Hybrid Modulation Scheme for Molecular Communications}

\author{Mustafa~Can~Gursoy,~\IEEEmembership{Student~Member,~IEEE,}
	Daewon~Seo,~\IEEEmembership{Member,~IEEE,}
	and~Urbashi~Mitra,~\IEEEmembership{Fellow,~IEEE} 
	\thanks{M. C. Gursoy and U. Mitra are with the Department of Electrical and Computer Engineering, University of Southern California, Los Angeles, CA 90089, USA (e-mails: \{mgursoy,~ubli\}@usc.edu).}
	\thanks{D. Seo is with the Department of Electrical and Computer Engineering,  University of Wisconsin-Madison, Madison, WI 53715, USA (e-mail: dseo24@wisc.edu).}
	\thanks{\color{black} This paper was presented in part at the IEEE International Conference on Communications (ICC 2020) \cite{our_ICC}. \color{black}This work has been funded in part by one or more of the following grants: ONR N00014-15-1-2550, NSF CCF-1817200, ARO W911NF1910269, Cisco Foundation 1980393, DOE DE-SC0021417, Swedish Research Council 2018-04359, NSF CCF-2008927, ONR 503400-78050.}
	\thanks{Digital Object Identifier: 10.1109/TMBMC.2021.3071772}
}

\markboth{TO APPEAR IN IEEE TRANSACTIONS ON MOLECULAR, BIOLOGICAL, AND MULTI-SCALE COMMUNICATIONS}
{TO APPEAR IN IEEE TRANSACTIONS ON MOLECULAR, BIOLOGICAL, AND MULTI-SCALE COMMUNICATIONS}

\maketitle

\begin{abstract}
Significant inter-symbol interference (ISI) challenges the achievement of reliable, high data-rate molecular communication via diffusion. In this paper, a hybrid modulation based on pulse position and concentration is proposed to mitigate ISI. By exploiting the time dimension, molecular concentration and position modulation (MCPM) increases the achievable data rate over conventional concentration and position-based modulations. In addition, unlike multi-molecule schemes, this hybrid scheme employs a single-molecule type and so simplifies transceiver implementations. In the paper, the optimal sequence detector of the proposed modulation is provided as well as a reduced complexity detector (two-stage, position-concentration detector, TPCD). A tractable cost function based on the TPCD detector is proposed and employed to optimize the design of the hybrid modulation scheme. In addition, the approximate probability of error for the MCPM-TPCD system is derived and is shown to be tight with respect to simulated performance. Numerically, MCPM shows improved performance over standard concentration and pulse position-based schemes in the low transmission power and high bit-rate regime. Furthermore, MCPM offers increased robustness against synchronization errors.
\end{abstract}

\begin{IEEEkeywords}
molecular communication via diffusion, modulation design, concentration-time modulation, hybrid modulation.
\end{IEEEkeywords}

\section{Introduction}
\label{sec:introduction}

\par Molecular communications is a promising bio-inspired communication for establishing nano-networks \cite{molecular_1}. Among different ways of establishing molecular communication links, molecular communication via diffusion (MCD) has received particular attention due to its energy efficiency and bio-compatibility \cite{birkansurvey}. In an MCD system, emitted molecules rely solely on free diffusion after emission from the transmitter to arrive at the receiver. Since the molecules propagate randomly in the environment, the arrival times at the receiver are random variables \cite{AIGN_channel}. This physical phenomenon causes inter-symbol interference (ISI), which challenges reliable, high data-rate communication \cite{ISI_ref}.

\par Modulation design remains an open problem in MCD systems due to the unique features of the communication channel. Standard approaches include encoding information in the emission intensity (concentration shift keying, CSK, \cite{CSKMOSK}), emitted molecule type (molecule shift keying, MoSK, \cite{CSKMOSK,isomermosk}), or emission time (pulse position modulation, PPM, \cite{molcom_PPM}) of the molecular signal. 

\par \color{black} To combat ISI and increase data-rates, multiple molecule types have been employed to create orthogonal communication streams at the expense of more complex transceivers \cite{MCSK,D-MoSK}. Inhibitory molecules are employed in \cite{zebraCSK} to further reduce ISI, while in \cite{ISIBurcu,hybrid_2mol}, the molecule type is modified as a function of the past transmitted bits. Finally, multiple molecule types and PPM are the basis of the hybrid modulation in \cite{new_multimol_hybrid}.  While these schemes achieve their goals, they rely on synthesizing, storing, and counting multiple types of molecules, thus incurring significantly increased complexity over a single-molecule method. Motivated by this, we consider emission timing as a degree of freedom coupled with single molecule type signaling herein. 

\par As a timing-based modulation, PPM has received wide attention for radio-frequency based communications in ultrawideband \cite{PPM_UWB}, visible light communications \cite{PPM_VLC}, \textit{etc}. It has also been examined in the context of MCD \cite{molcom_PPM}, with maximum likelihood detection in the absence of ISI studied in \cite{murin1_opt} and higher order PPM for ISI mitigation investigated in \cite{bayramPPM}. 

\par In this paper, we encode information in emission concentration and time jointly. We observe that in an independent work \cite{JTAC}, hybrid concentration-time modulation is also considered. Therein, the capacity of \textbf{ISI-free} concentration-time channels is examined and shown to be larger than that of concentration or time alone. We underscore that herein, we do consider ISI and there are trade-offs to be made in the \textbf{design} of the hybrid modulation in order to achieve the best BER performance in such channels. In our preliminary work \cite{our_ICC}, we had showed that the hybrid modulation scheme achieves lower BER than binary CSK (BCSK) and PPM in MCD channels with severe ISI. This paper extends and completes \cite{our_ICC}. 

\color{black}

\par \color{black} Overall, the key contributions of this paper are as follows:
\begin{itemize}
    \item We propose a hybrid MCD modulation scheme that utilizes a single type of molecules. The proposed scheme merges $\mathcal{K}$-PPM and BCSK, which we call \textit{$\mathcal{K}$-ary molecular concentration and position modulation} ($\mathcal{K}$-MCPM).
    \item Considering the MCD channel faces ISI, we derive the maximum likelihood sequence detector (MLSD) for $\mathcal{K}$-MCPM.
    \item In addition to the MLSD that is computationally expensive, we propose a two-stage, position-concentration detector (TPCD) that reduces complexity. TPCD first detects the emission time, then performs a fixed threshold ($\gamma$) to resolve the concentration information.
    \item The binary concentrations are characterized by a parameter $\alpha \in (0.5,1)$. The parameter $\alpha$ is optimized under the assumption of the use of TPCD and through the derivation of a convex proxy for the error probability. We also present a method that estimates the optimal threshold parameter $\gamma$.
    \item Our numerical results suggest that the theoretically optimized $(\alpha,\gamma)$ pair yields close-to-optimal performance compared to the values found via exhaustive search for TPCD. 
    \item For a fixed $(\alpha,\gamma)$ pair, we derive the approximate error probability expression for a $\mathcal{K}$-MCPM scheme.
    \item Our numerical results show that MCPM outperforms BCSK and PPM, especially when the bit-rate is high and the transmission power is low. Furthermore, our results show that the MCPM scheme is more robust to synchronization offsets than PPM scheme of the same order. 
    \color{black}
\end{itemize}

\par Our prior work \cite{our_ICC} introduced the transmitter architecture of MCPM and the working principles of its TPCD detector. Herein, we complete these designs and analyses. In particular, we complete the modulation design by characterizing and solving the constellation point design problem for MCPM-TPCD. Furthermore, we derive the maximum likelihood sequence detector (MLSD) for MCPM, and introduce a low-complexity threshold selection method for TPCD. We also discuss the effects of temporal mis-synchronization on MCPM.  Full derivations and proofs are provided herein in contrast to \cite{our_ICC}.

\par The rest of the paper is organized as follows: Section \ref{sec:systemmodel} introduces the channel model. Section \ref{sec:proposedscheme} describes the proposed modulation scheme and discusses its key trade-off. Section \ref{sec:receiverdesign} introduces the optimal detector and a low complexity alternative which first detects position and then resolves the concentration. Section \ref{sec:BER} derives the error probability of the MCPM-TPCD system. Section \ref{sec:alphagamma} proposes theoretical methods to select the $(\alpha,\gamma)$ pair for an MCPM scheme. Section \ref{sec:results} presents numerical error probability results. Lastly, Section \ref{sec:conclusion} concludes the paper. Appendices A and B provide the proofs of the lemma and theorem presented in Section \ref{sec:alphagamma}.

\begin{table}[!h]
\centering
\caption{List of Abbreviations}
\label{tab:abbreviations}
\begin{tabular}{ll}
\hline
Abbreviation               & Full Name                   \\ \hline
BER & bit error ratio \\
BCSK& binary concentration shift keying \\
ES & exhaustive search \\
ISI & inter-symbol interference \\
LTI & linear time-invariant \\
MCD & molecular communication via diffusion \\
MCPM   & molecular concentration and position modulation  \\
ML & maximum likelihood \\
MLSD & maximum likelihood sequence detector \\
MoSK  & molecule shift keying \\
PPM & pulse-position modulation \\
TPCD          &  two stage, position-concentration detector \\ \hline                        
\end{tabular}
\end{table}
\color{black}

\section{System Model}
\label{sec:systemmodel}

\par The MCD system in this paper involves a single point transmitter and a single spherical absorbing receiver of radius $r_r$ in an unbounded $3$-D environment. The distance between the transmitter and the center of receiver is denoted by $r_0$. The transmitter and the receiver are assumed to have perfect synchronization unless stated otherwise. A visualization of the propagation environment is provided in Figure \ref{fig:topology}. In this case, assuming that carrier molecules have a diffusion coefficient $D$, the arrival probability density of a molecule $t$ seconds after its emission can be written as
\begin{equation}\label{eq:arrival_pdf}
f_{\textrm{hit}}(t) = \frac{r_{r}}{r_0} \frac{1}{\sqrt[]{4\pi Dt}} \frac{r_0 - r_r}{t} e^{- \frac{(r_{0}-r_{r})^2}{4Dt} },
\end{equation}
where $t \in (0,\infty)$.

\begin{figure}[!t]
	\centering
	\includegraphics[width=0.30\textwidth]{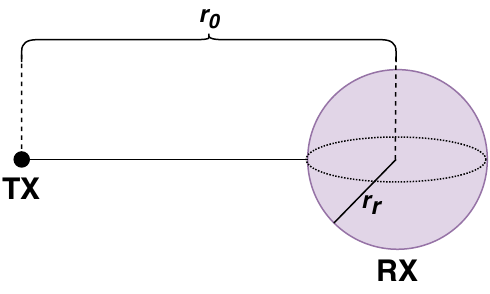} 
	\caption{The considered system model.}	
	\label{fig:topology}
\end{figure}

\par In a time slotted channel with sequential transmissions, the MCD channel is characterized by the channel coefficients, where the $n^{th}$ channel coefficient $h_n$ can be computed as
\begin{equation}
    \label{eq:FIRcoefs}
	h_n = \int_{(n-1)t_s}^{nt_s} f_{\textrm{hit}}(t) dt, \hspace{0.5cm} n = 1, 2, \dots, L.
\end{equation}
Here, $t_s$ denotes the time within the receiver's counting intervals (\textit{i.e.}, slots) and $L$ is the considered channel memory. In reality, the considered channel has infinite memory due to the heavy right tail of the arrival distribution \cite{3Dchar}. However, given that the arrival density function has small magnitude after a certain duration, we approximate the channel as having a finite duration, denoted by $L$. We select $L$ by considering a total time after which we neglect the arrivals ($t_{\textrm{total}}$), and computing $L$ using $t_{\textrm{total}}$ and $t_s$. Note that it is desirable to have a large $t_{\textrm{total}}$ to satisfactorily capture the right tail of the arrival density function in Equation \eqref{eq:arrival_pdf}.

\par The channel coefficients $h_n$ can be interpreted as a single molecule's probability of arrival at the receiver at the $n^{th}$ slot after its release. In a time-slotted MCD system where multiple molecules are emitted for each transmitted symbol, we employ the linear time-invariant (LTI)-Poisson channel model to characterize the number of arriving molecules at each slot \cite{LTI_Poisson}. According to the LTI-Poisson model, the arrival count in the $m^{th}$ time slot, denoted by $R_{m}$ in this paper, is distributed as
\begin{equation}
\label{eq:receivedPois}
R_{m} \sim \mathcal{P}  \bigg(\sum _{n=1}^{L} N_{m-n+1} h_{n} \bigg),
\end{equation}
where $\mathcal{P}(\cdot)$ denotes the Poisson distribution with argument as the rate parameter. In addition, $N_m$ denotes the number of molecules emitted by transmitter at the beginning of the $m^{th}$ time slot.

\section{Proposed Scheme}
\label{sec:proposedscheme}

\begin{table*}[!t]
	\centering
	\caption{Average emitted molecules per transmission and slot durations for BCSK, PPM, and MCPM}
	\begin{tabular}{llllllll}
		\hline
		\textbf{Modulation Scheme}  & \textbf{BCSK} & \textbf{$2$-PPM}   & \textbf{$4$-PPM}  & \textbf{$8$-PPM} & \textbf{$2$-MCPM} & \textbf{$4$-MCPM} & \textbf{$8$-MCPM} \\ \hline
		\begin{tabular}[c]{@{}l@{}} Transmitted bits \\ per unit symbol \end{tabular} 
		& 1     & 1    & 2     & 3     & 2 & 3 & 4   \\ \hline
		\begin{tabular}[c]{@{}l@{}} Sub-intervals \\ per symbol \end{tabular}   & 1  & 2   & 4   & 8   & 2 & 4 & 8  \\ \hline
		Sub-interval duration ($t_s$) & $t_b$   & $\frac{1}{2}t_b$  & $\frac{2}{4}t_b$ & $\frac{3}{8}t_b$  & $\frac{2}{2}t_b = t_b$ & $\frac{3}{4}t_b$ & $\frac{4}{8}t_b$ \\ \hline
		Bit duration  & $t_b$    & $t_b$   & $t_b$  & $t_b$ & $t_b$ & $t_b$ & $t_b$  \\ \hline
		\begin{tabular}[c]{@{}l@{}} Molecules per emission  \\ ($N$) \end{tabular}   & \begin{tabular}[c]{@{}l@{}}$2M$ for bit-1,\\ $0$ for bit-0 \end{tabular} 
		& $M$ & $2M$ & $3M$   & \begin{tabular}[c]{@{}l@{}l@{}l@{}} \textbf{BCSK bit-1} \\ $2M \times 2\alpha$ \\ \textbf{BCSK bit-0} \\ $2M \times 2(1-\alpha)$\end{tabular}  & \begin{tabular}[c]{@{}l@{}l@{}l@{}} \textbf{BCSK bit-1} \\ $3M \times 2\alpha$ \\ \textbf{BCSK bit-0} \\ $3M \times 2(1-\alpha)$\end{tabular} & \begin{tabular}[c]{@{}l@{}l@{}l@{}} \textbf{BCSK bit-1} \\ $4M \times 2\alpha$ \\ \textbf{BCSK bit-0} \\ $4M \times 2(1-\alpha)$\end{tabular} \\ \hline
		Molecules per bit (on average) & $M$  & $M$  & $M$   & $M$  & $M$ & $M$  & $M$ \\ \hline
	\end{tabular}
	\label{tab:paramtable}
\end{table*}

\subsection{Modulation Description}

\par Our proposed modulation scheme employs both concentration and the specific emission time of the molecules to convey information. Specifically, the proposed scheme combines the well-known PPM constellations with the conventional BCSK scheme and yields a two dimensional constellation diagram. Overall, the combination of $\mathcal{K}$-PPM and BCSK is referred to as the $\mathcal{K}$-ary molecular concentration-position modulation ($\mathcal{K}$-MCPM). 

\par At the transmitter, a serial-to-parallel conversion is done to group the bit stream into groups of length $(\log_2 \mathcal{K}) +  1 $ bits. As a convention, we consider the first $\log_2 \mathcal{K}$ to modulate the \textit{PPM component} of the modulation whereas the last bit determines the \textit{BCSK component}. In other words, the emission sub-interval of the molecular pulse is determined by the first $\log_2 \mathcal{K}$ bits, and the intensity of the said pulse is high or low depending on the BCSK bit being a ``$1$" or ``$0$". The transmission strategy for $4$-MCPM is presented in Figure \ref{fig:circled} for visualization purposes. 

\begin{figure}[!t]
	\centering
	\includegraphics[width=0.47\textwidth]{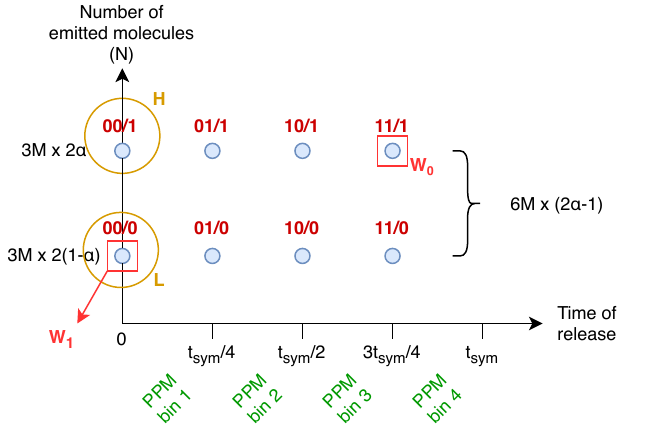} 
	\caption{The transmission strategy of $4$-MCPM. The last bit determines the emission intensity of the molecular pulse, whereas the first two bits govern the emission instant. Circled constellation points are of interest in $\alpha$ optimization.}	
	\label{fig:circled}
\end{figure}

\subsection{The Parameter $\alpha$}
\label{subsec:alpha_intro}

\par Throughout the paper, the transmission power is normalized on a per bit basis \cite{MCSK,bayramPPM}. In an MCD system, the energy consumption is related to the number of emitted molecules \cite{energy_model}. Hence, the transmission power normalization is done through imposing an average number of emitted molecules ($M$) constraint. The value of $M$ is a per-bit constraint such that each evaluated scheme emits, on average, $M$ molecules per bit. Additionally, we also employ a bit-rate normalization by imposing a constant bit duration, $t_b$. Thus, as more bits are used per symbol, the symbol duration ($t_{sym}$) is longer.

\par For a $\mathcal{K}$-MCPM scheme, the symbol duration can be written as $t_{sym} = (1 + \log_2 \mathcal{K}) t_b$, making each sub-slot duration $t_s = \frac{(1 + \log_2 \mathcal{K}) t_b}{\mathcal{K}}$. Furthermore, for $\mathcal{K}$-MCPM, the average number of molecules emitted per $\mathcal{K}$-MCPM symbol is equal to $E[N] = (1 + \log_2 \mathcal{K}) M$. Assuming all bits/symbols are equally likely, we have that
\begin{itemize}
    \item the MCPM symbols having the BCSK bit `$1$' are transmitted with $N = 2\alpha (1 + \log_2 \mathcal{K}) M$ molecules, and
    \item the MCPM symbols having the BCSK bit `$0$' are transmitted with $N = 2(1-\alpha) (1 + \log_2 \mathcal{K}) M$ molecules, 
\end{itemize}
where $\alpha \in (0.5,1)$. Table \ref{tab:paramtable} is provided to show the implications of these constraints in MCPM, traditional BCSK, and PPM.

\par The $\alpha$ parameter defines the two concentration levels for the BCSK part of the hybrid constellation, and thus determines the distances between constellation points. The value of $\alpha = 0.5$ results in no difference in concentration, yielding symbol ambiguities. On the other hand, if $\alpha$ is close to one, then the PPM signals for the BCSK symbols for the low concentration become hard to detect. Therefore, $\alpha$ is a design parameter to be optimized for an MCPM scheme. We formulate the $\alpha$ optimization problem in Section \ref{sec:alphagamma}. \color{black}

\section{Receiver Design}
\label{sec:receiverdesign}

\color{black}
\subsection{Optimal Detector}

\par Due to ISI, the maximum likelihood (ML) detectors for MCD modulations are in the form of ML sequence detectors (MLSD) \cite{molcom_receivers_akan}. We next present the MLSD for a $\mathcal{K}$-MCPM scheme herein. 

\par Let $\mathcal{S}$ denote the block length in terms of MCPM symbols. In addition, we define $\lambda_{m|\boldsymbol{s}}$ as the rate parameter of the arrival random variable at the $m^{th}$ interval, conditioned on the MCPM symbol sequence $\boldsymbol{s}$. Note that due to their definitions, $m \in \{ 1,\dots, \mathcal{S}\mathcal{K}\}$ and $\boldsymbol{s}$ is a vector of length $\mathcal{S}$. Recalling the LTI-Poisson model in \eqref{eq:receivedPois}, $\lambda_{m|\boldsymbol{s}}$ can be calculated as

\begin{equation} 
    \lambda_{m|\boldsymbol{s}} = \sum_{n=1}^{\mathcal{K} \cdot L_s} N_{m-n+1 | \boldsymbol{s}} h_{n} 
\end{equation}
where $N_{m | \boldsymbol{s}}$ denotes the number of emitted molecules by the transmitter at the $m^{th}$ time slot, conditioned on the candidate symbol sequence $\boldsymbol{s}$. Note that the rate parameter $\lambda_{m|\boldsymbol{s}}$ depends on the transmitted symbol sequence as it follows from \eqref{eq:receivedPois}. By defining the vector $\boldsymbol{r} = \begin{bmatrix} R_1 & \dots & R_{\mathcal{S} \cdot \mathcal{K}} \end{bmatrix}$, the MLSD for a $\mathcal{K}$-MCPM scheme can be expressed as
\begin{equation}
    \label{eq:MLSD}
    \begin{split}
        \hat{\boldsymbol{s}} &= \argmax_{\boldsymbol{s}} P(\boldsymbol{r} | \boldsymbol{s}) \\
        &= \argmax_{\boldsymbol{s}} \prod_{m=1}^{\mathcal{S} \cdot \mathcal{K}} \frac{\lambda_{m|\boldsymbol{s}}^{R_m} e^{-\lambda_{m|\boldsymbol{s}}}} {R_m!} \\
        &= \argmax_{\boldsymbol{s}} \sum_{m=1}^{\mathcal{S} \cdot \mathcal{K}} R_m \ln (\lambda_{m|\boldsymbol{s}}) - \lambda_{m|\boldsymbol{s}},
    \end{split}
\end{equation}
where $\hat{\boldsymbol{s}}$ denotes the vector of decoded symbols. For a block of length $\mathcal{S}$ and a memory of $L_s$ MCPM symbols (\textit{i.e.}, $L = \mathcal{K} \cdot L_s$), the $\mathcal{K}$-MCPM MLSD is of complexity $\mathcal{O}((2\mathcal{K})^{L_s}\mathcal{S})$ using a Viterbi decoder \cite{viterbialg}. The MLSD is of high complexity, but will serve as a benchmark to illustrate the performance complexity trade-off for our proposed decoder introduced in the sequel.
 
\subsection{A Reduced Complexity Detector}

\par Herein, we present an MCPM detector with low complexity, which we call the two-stage, position-concentration detector (TPCD). It employs two stages as its name implies: One for the PPM information and another for the BCSK. 

\par Recall that $s_k$ represents the $k^{th}$ MCPM symbol. At this point, we assume that $h_1 > \max(h_2,\dots,h_L)$, which implies that the first path is dominant\footnote{For MCD systems that yield practical error probabilities, this assumption is generally satisfied. However, it may not hold for extremely small symbol durations due to the behavior of \eqref{eq:arrival_pdf}.}. Given this assumption, the intended sub-interval is expected to have the largest arrival count among the $\mathcal{K}$ PPM bins. Denoting the emission slot of $s_k$ as $q_k$, the first stage of the detector performs 
\begin{equation}
    \label{eq:PPM_det}
    \hat{q}_k = \argmax_{j \in \{(k-1)\mathcal{K}+1,\dots, k\mathcal{K}\}} R_j.
\end{equation}

\par Since arrival counts are Poisson random variables, the arrival count random variable with the largest expected value has the highest mean-over-standard deviation ratio. The second stage of the MCPM detector performs fixed threshold detection on the largest arrival count to detect the BCSK bit. Denoting the decision threshold as $\gamma$, the rule can be written as
\begin{equation}
    \label{eq:CSK_det}
    R_{\hat{q}_k} \mathop{\gtrless}_{H_0}^{H_1} \gamma,
\end{equation}
where ${H_0}$ and ${H_1}$ correspond to the hypotheses that the $k^{th}$ MCPM symbol's BCSK bit being a `$0$' and a `$1$', respectively.

\color{black}
\subsection{Comparing the Detectors}
\label{subsec:detectoreval}

\par To compare the error performances of the two detection strategies, Figure \ref{fig:MLSD_TPCD_eval} is presented. Note that a fixed and small symbol memory $L_s = 3$ is selected for simulation, since the Viterbi decoding in the MLSD has exponential complexity in $L_s$.

\par As expected, the results of Figure \ref{fig:MLSD_TPCD_eval} suggests that TPCD incurs a performance loss comparative to the optimal detector. Furthermore, it is observable that the performance gap increases with $\mathcal{K}$. Note that for a $\mathcal{K}$-MCPM scheme, TPCD only considers the largest of the $\mathcal{K}$ obtained arrival counts, essentially disregarding the information coming from other $\mathcal{K}-1$ branches. On the other hand, the MLSD utilizes the arrival count information from all $\mathcal{K}$ slots.

\par Since the Viterbi decoder performs $(2\mathcal{K})^{L_s}$ likelihood computations per each $\mathcal{K}$-MCPM symbol (\textit{i.e.}, $\mathcal{O}( (2\mathcal{K})^{L_s}\mathcal{S})$ per one block), it is considerably more complex than TPCD which performs only two comparisons (\eqref{eq:PPM_det} and \eqref{eq:CSK_det}) per symbol (\textit{i.e.}, $\mathcal{O}(2\mathcal{S})$). This introduces a performance-complexity trade-off in terms of receiver design. In this paper, we will focus on the low-complexity option, TPCD, to demodulate MCPM symbols. Overall, the diagram of an MCD system that utilizes the MCPM scheme with TPCD is presented in Figure \ref{fig:overall_diagram}. 

\begin{figure}[!t]
	\centering
	\includegraphics[width=0.48\textwidth]{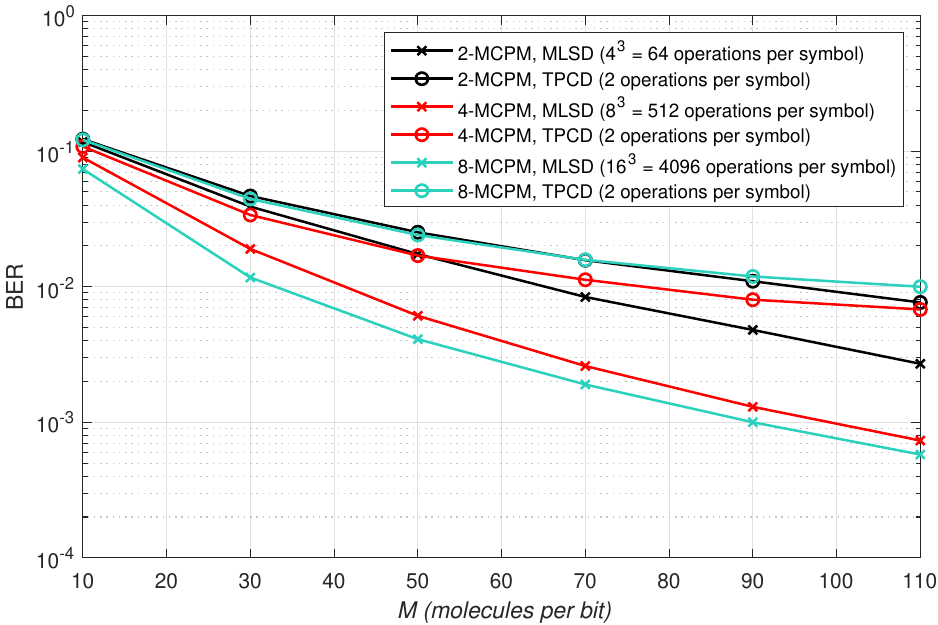}
	\caption{\color{black} BER vs. $M$ curves for MCPM using MLSD and TPCD. $t_b = \SI{0.30}{\second}$, $r_0 = \SI{10}{\micro\meter}$, $r_r = \SI{5}{\micro\meter}$, $D = \SI{79.4}{\micro\meter\squared\per\second}$, $L_s = 3$. $\alpha$ and $\gamma$ numerically optimized. \color{black}}
	\label{fig:MLSD_TPCD_eval}
\end{figure}

\begin{figure*}[!t]
	\centering
	\includegraphics[width=0.75\textwidth]{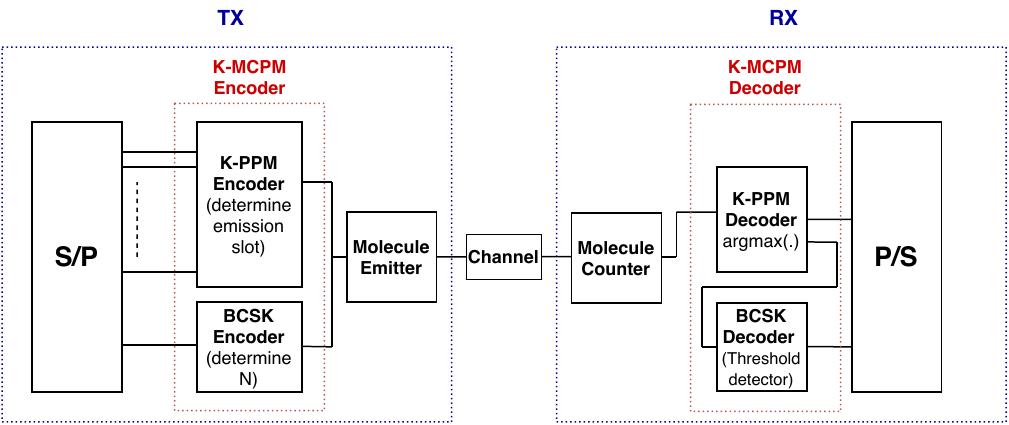} 
	\caption{Overall transmission and reception model of an MCD system utilizing $\mathcal{K}$-MCPM and the TPCD.}	
	\label{fig:overall_diagram}
\end{figure*}

\color{black}
\section{Error Analysis}
\label{sec:BER}

\par We have two design parameters which affect the performance of our overall system. To this end, we first provide an error analysis for the overall system given a fixed ($\alpha$,$\gamma$) pair. Given the ISI characteristics of the MCD channel \cite{molcom_receivers_akan,matched_filter,our_ICC}, the error probability can be found by averaging over conditional error probabilities. For a symbol memory of length $L_s$, this averaging is done over all possible symbol sequences of length $L_s$. Denoting the MCPM symbol sequence between the $(k-L_s+1)^{th}$ and $k^{th}$ transmissions as $s_{k-L_s+1:k}$, we write
\begin{equation}
\label{eq:overall}
P_e = \sum_{\forall s_{k-L_s+1:k}}^{} \Big(\frac{1}{2\mathcal{K}}\Big)^{L_s}  P_{e|s_{k-L_s+1:k}}.
\end{equation}
Denoting $d_H(\cdot,\cdot)$ as the Hamming distance operator, we proceed by obtaining the conditional error probability expression on the right-hand side as
\begin{equation}
\label{eq:conditional}
P_{e|s_{k-L_s+1:k}} = \sum_{n=1}^{2\mathcal{K}} \frac{d_H(\textbf{v}_{s_k},\textbf{v}_{n})}{1 + \log_2\mathcal{K} } P(\hat{s}_k = n | s_{k-L_s+1:k}).
\end{equation}
Here, $\textbf{v}_{(\cdot)}$ denotes the binary vector corresponding to the integer symbol in its argument, and $\hat{s}_k$ denotes the detected symbol. 

\par From \eqref{eq:conditional}, we seek to characterize $P(\hat{s}_k = n | s_{k-L_s+1:k})$. Denoting the conditional event $\{\hat{s}_k = n | s_{k-L_s+1:k}\}$ as $\hat{n}$, we note that the expression for $P(\hat{n})$ depends on the BCSK constellation of the intended bit. Let $b_{m} \in \{b_1,\dots,b_{\mathcal{K}}\}$ be the corresponding integer representation of the PPM bins of the MCPM symbol determined by $\textbf{v}_n$. Therefore, $R_{b_m}$ is a random variable conditioned on the $s_{k-L_s+1:k}$ sequence that describes the arrival count of $R_{(k-1)\mathcal{K}+m}$. $P(\hat{n})$ can be written as
\begin{equation}
\label{eq:exact_piecewise}
\begin{split}
&P(\hat{n}) = \\
&\begin{cases} 
P(R_{b_m} > \max (R_{b_m}^{'}) , R_{b_m} > \gamma) & \hspace{0.1cm} \textrm{if} \hspace{0.1cm} v_n[1 + \log_2\mathcal{K}] = 1, \\
P(R_{b_m} > \max (R_{b_m}^{'}) , R_{b_m} \leq \gamma) & \hspace{0.1cm} \textrm{if} \hspace{0.1cm} v_n[1 + \log_2\mathcal{K}] = 0, \\
\end{cases}
\end{split}
\end{equation}
where $R_{b_m}^{'}$ denotes the set of arrival counts corresponding to each bin other than $b_m$. \color{black} Denoting $Y = \max (R_{b_m}^{'})$, we can find $P(\hat{n})$ where $v_n[1 + \log_2\mathcal{K}] = 0$ as
\begin{equation}\label{eq:Pn_0}
\begin{split}
P(R_{b_m} > \max (R_{b_m}^{'}) , R_{b_m} & \leq \gamma) = P(Y < R_{b_m} \leq \gamma) \\
=& \int_{-\infty}^{\gamma} \bigg[ \int_{y}^{\gamma} f_{R_{b_m}}(r) dr \bigg]   f_Y(y) dy \\
=& \int_{-\infty}^{\gamma} \bigg[ \int_{-\infty}^{r}  f_Y(y) dy \bigg] f_{R_{b_m}}(r) dr \\
=& \int_{-\infty}^{\gamma} F_Y(r) f_{R_{b_m}}(r) dr.
\end{split}
\end{equation}
\color{black}
Using the Gaussian approximation on the Poisson arrival counts \cite{arrivalmodel}, the CDF of $Y$ can be found as
\begin{equation} 
\begin{split}\label{eq:F_Y}
F_Y(r) =& P\big(\max (R_{b_m}^{'}) \leq r \big) \\
=& \prod_{\substack{\tau = 1 \\ \tau \neq b_m}}^{\mathcal{K}} P(R_{b_\tau} \leq r) \\
=& \prod_{\substack{\tau = 1 \\ \tau \neq b_m}}^{\mathcal{K}} \bigg[ 1- Q\bigg(\frac{r-\mu_{R_{b_\tau}}}{\sigma_{R_{b_\tau}}}\bigg)\bigg],
\end{split}
\end{equation}
where $Q(\cdot)$ is the complementary cumulative distribution function of the standard Gaussian. Here, $\sigma_{R_{b_\tau}} = \sqrt{\mu_{R_{b_\tau}}}$ as the arrival counts are Poisson random variables. In addition, each mean (rate) parameter conditioned on $s_{k-L_s+1:k}$ can be found as $\mu_{R_{b_\tau}} =  \sum_{n=1}^{\mathcal{K} \cdot L_s -1} N_{(k-1)\mathcal{K}+\tau-n+1} h_{n} $, due to the LTI-Poisson nature of the channel as described in \eqref{eq:receivedPois}. We also note that the non-zero $N$ values (\textit{i.e.}, emitted number of molecules at transmission instants) are either $B \alpha$ or $B (1-\alpha)$ depending on the concentration constellation, where $B = 2M (1 + \log_2\mathcal{K})$. Lastly, Similar to \eqref{eq:Pn_0}, $P(\hat{n})$ for the case where $v_n[1 + \log_2\mathcal{K}] = 1$ can be expressed as
\begin{equation}\label{eq:Pn_1}
\begin{split}
P(R_{b_m} > \max (R_{b_m}^{'}) , R_{b_m} > \gamma) = \int_{\gamma}^{\infty} F_Y(r) f_{R_{b_m}}(r) dr,
\end{split}
\end{equation}
completing our derivation. \color{black}

\begin{figure}[!t]
	\centering
	\includegraphics[width=0.48\textwidth]{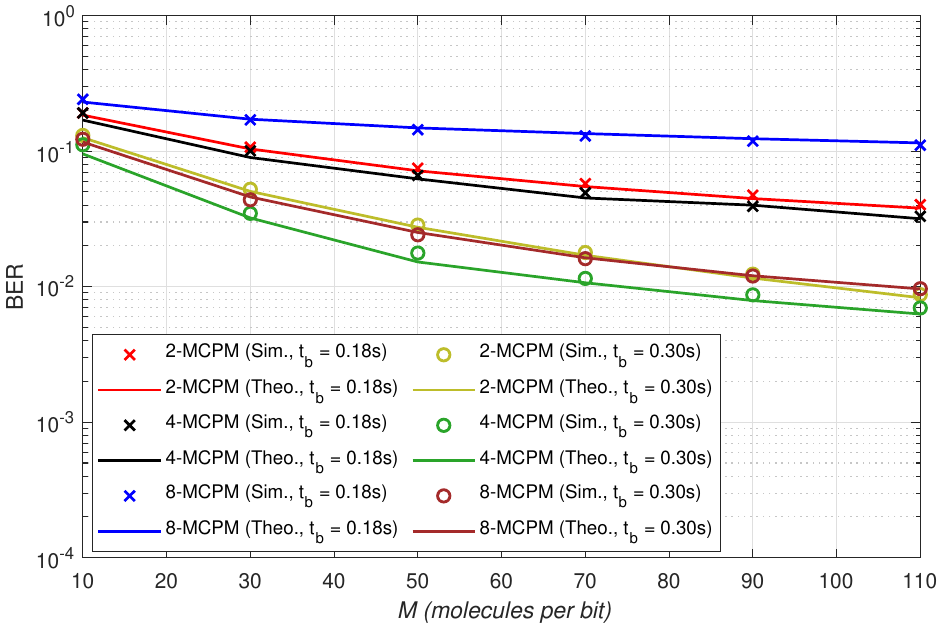} 
	\caption{BER vs. $M$ curves. $t_b = 0.18$ and $0.30$s, $r_0 = 10\mu$m, $r_r = 5 \mu$m, $D = 79.4 \frac{\mu m^2}{s}$, and $t_{\textrm{total}} = 12t_b$. ($\alpha$,$\gamma$) pairs optimized through exhaustive search.}	
	\label{fig:theo_L12}
\end{figure}

\begin{figure}[!t]
	\centering
	\includegraphics[width=0.48\textwidth]{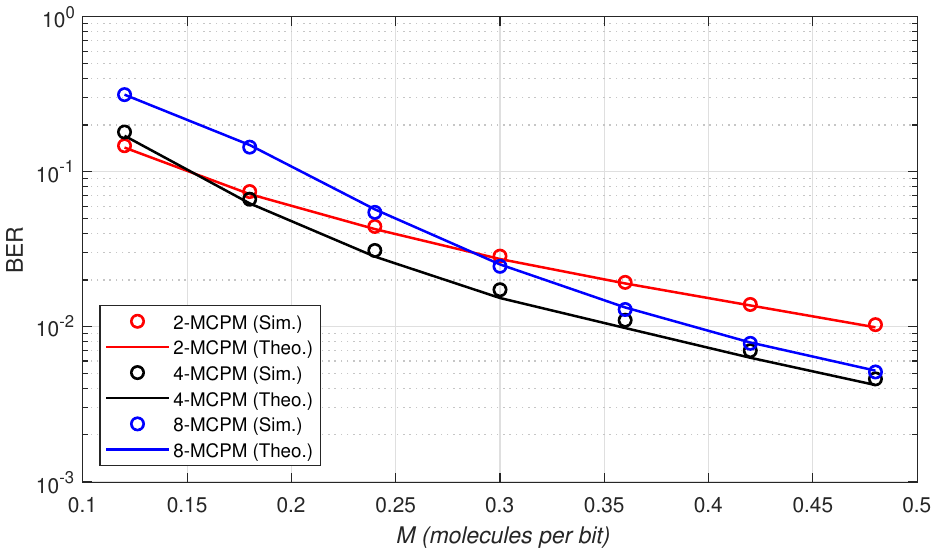} 
	\caption{BER vs. $t_b$ curves. $M = 50$ molecules, $r_0 = 10\mu$m, $r_r = 5 \mu$m, $D = 79.4 \frac{\mu m^2}{s}$, and $t_{\textrm{total}} = 12t_b$. ($\alpha$,$\gamma$) pairs optimized through exhaustive search.}
	\label{fig:theo_L12_tb}
\end{figure}

\par Figure \ref{fig:theo_L12} provides the probability of error versus $M$ for different MCPM orders to demonstrate the accuracy of Equations \eqref{eq:overall}-\eqref{eq:F_Y}. Furthermore, Figure \ref{fig:theo_L12_tb} presents BER versus $t_b$ to evaluate the accuracy of the derived BER under different ISI magnitudes. While, our probability of error derivation invokes approximations, Figures \ref{fig:theo_L12} and \ref{fig:theo_L12_tb} confirm that the closed form expressions provide good approximations to the true probability of error.

\section{The Optimization of $\alpha$ and $\gamma$}
\label{sec:alphagamma}

\par As noted in Section \ref{sec:proposedscheme}, for a fixed $M$, $\alpha$ is a parameter that poses a trade-off between the detection accuracies of the position and concentration constellations. For an MCPM system that uses TPCD at the receiver, we see that the ($\alpha$,$\gamma$) pair needs to be optimized with the goal of minimizing the probability of error. In this section, we address this optimization problem.

\par Directly optimizing the probability of error expression derived in Equations \eqref{eq:overall}-\eqref{eq:Pn_1} is computationally expensive as one would need to consider $(2\mathcal{K})^{L_s}$ conditional error probabilities for each evaluated ($\alpha,\gamma$) pair. To this end, we determine methods informed by the true probability of error and take a two-step approach: We first estimate $\alpha$ and then use the estimated $\alpha$ to optimize $\gamma$.

\subsection{Selecting $\alpha$}
\label{subsec:alpha}

\par Herein, inspired by the properties of an MCPM scheme and the nature of the MCD channel, we propose a low-complexity sub-optimal cost function that is provably convex under reasonable conditions. Note that $\alpha$ determines the distances between different MCPM constellation points. For tractability, when optimizing $\alpha$, we use a hypothetical no-ISI scenario (even though the evaluated channel has ISI) instead of considering the ISI between consecutive MCPM symbols. \color{black} Our results will show that this approximation is not detrimental to overall performance.

\par The no-ISI assumption corresponds to the case where the channel is cleared after each MCPM symbol. Note that the ISI is still present within the temporal bins of each MCPM symbol. In this hypothetical scenario, among each column of the 2-D constellation diagram, the left-most one is the most likely to be detected erroneously. Therefore, we focus our design into the left-most constellation points (\textit{i.e.}, the circled points ``H" and ``L" in Figure \ref{fig:circled}). 

\par The error probabilities associated with these points can be written as 
\begin{equation}
\begin{split}
P_{e|H} &= P(R_1 < R_2 \cup \dots \cup R_1 < R_{\mathcal{K}} \cup R_1 < \gamma | \textrm{H sent})\\
P_{e|L} &= P(R_1 < R_2 \cup \dots \cup R_1 < R_{\mathcal{K}} \cup R_1 > \gamma | \textrm{L sent}).
\end{split}
\label{eq:Peilk}
\end{equation}
We use the union bound on the expression $P_{e|H} + P_{e|L}$ to determine our cost function as
\begin{equation}
	\begin{split}
		C &= P(R_1 < \gamma | \textrm{H sent}) + P(R_1 > \gamma | \textrm{L sent})\\
		& + \sum_{i=2}^{\mathcal{K}} P(R_1 < R_i | \textrm{H sent}) + P(R_1 < R_i | \textrm{L sent}).
	\end{split}
	\label{eq:cost_function_proto}
\end{equation}
Using the Gaussian approximation on the arrival counts \cite{arrivalmodel}, we obtain
\begin{equation}
\begin{split}
&C = Q\bigg( \frac{B\alpha_{U} h_1 - \gamma_U}{ \sqrt{B \alpha_{U} h_1}} \bigg) + Q\bigg( \frac{\gamma_{U} - B(1-\alpha_{U}) h_1}{ \sqrt{B (1-\alpha) h_1}} \bigg) + \\
&\sum_{i=2}^{\mathcal{K}} Q\bigg( \frac{B \alpha_{U} (h_1 - h_i)}{ \sqrt{B \alpha_{U} (h_1 +h_i)}} \bigg) + Q\bigg( \frac{B (1-\alpha_{U}) (h_1 - h_i)}{ \sqrt{B (1-\alpha_{U}) (h_1 +h_i)}} \bigg),
\end{split}
\label{eq:cost_function_Q}
\end{equation}
where the ($\alpha_U, \gamma_U$) pair represents the $\alpha$ and $\gamma$ values for the hypothetical no-ISI scenario. 

\par Minimizing $C$ requires the optimization of $\alpha_{U}$ and $\gamma_{U}$ jointly. However, by deriving the optimal $\gamma_{U}$ value in terms of $\alpha_U$, we can reduce the dimension of the numerical search. We now show the convexity of $C$ in $\alpha_U$ under the following set of conditions:
\begin{equation}
	\begin{split}
		&0.5 < \alpha_U < 1 \\
		&B(1-\alpha_U)h_1 < \gamma_U < B \alpha_U h_1 \\
		&h_1 > \max(h_2,\dots,h_\mathcal{K}) > 0 \\
		&B \alpha h_1 - B(1-\alpha)h_1 > 3
	\end{split}
	\label{eq:assumptions}
\end{equation}
Note that the first two conditions define the valid intervals of the parameters $\alpha_U$ and $\gamma_U$. Here, the interval of $\alpha_U$ follows from the definition of the parameter in Subsection \ref{subsec:alpha_intro}, and $\gamma_U$ needs to lie within the interval $(B(1-\alpha_U)h_1,B\alpha_U h_1)$ to be a meaningful threshold. The third condition assumes that the symbol duration is defined such that the first channel coefficient is the largest one. The last condition lower bounds the distance between the expected arrival counts of adjacent concentration constellations and is typically satisfied in practical MCD links, including all data points generated in the paper.

\par We first start by finding the optimal $\gamma_U$ for each $\alpha_U$ that minimizes $C$ by providing the following lemma:
\begin{lemma}[Convexity in $\gamma_U$]
	For all $\alpha_U \in (0.5,1)$, the cost function $C$ is convex in $\gamma_U$, given the validity condition $B(1-\alpha_U)h_1 < \gamma_U < B \alpha_U h_1$ is held.
	\label{lemma:gamma_convex}
\end{lemma}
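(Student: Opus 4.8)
The plan is to show convexity of $C$ as a function of $\gamma_U$ alone (with $\alpha_U$ fixed), by observing that only the first two $Q$-function terms in \eqref{eq:cost_function_Q} depend on $\gamma_U$; the remaining terms in the sum over $i$ are constants with respect to $\gamma_U$ and hence contribute zero to the second derivative. So it suffices to prove that
\[
g(\gamma_U) = Q\!\left( \frac{B\alpha_U h_1 - \gamma_U}{\sqrt{B\alpha_U h_1}} \right) + Q\!\left( \frac{\gamma_U - B(1-\alpha_U) h_1}{\sqrt{B(1-\alpha_U) h_1}} \right)
\]
is convex on the interval $B(1-\alpha_U)h_1 < \gamma_U < B\alpha_U h_1$.

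First I would write each term as $Q(u(\gamma_U))$ with an affine argument, and recall that $\frac{d}{dx}Q(x) = -\phi(x)$ where $\phi$ is the standard normal density, so $\frac{d^2}{dx^2}Q(x) = -\phi'(x) = x\,\phi(x)$. Applying the chain rule to an affine argument $u = a\gamma_U + c$ with slope $a$, the second derivative of $Q(u)$ with respect to $\gamma_U$ equals $a^2 u \phi(u)$. For the first term the slope is $a_1 = -1/\sqrt{B\alpha_U h_1}$ and the argument is $u_1 = (B\alpha_U h_1 - \gamma_U)/\sqrt{B\alpha_U h_1}$, which is strictly positive precisely because $\gamma_U < B\alpha_U h_1$; hence its contribution $a_1^2 u_1 \phi(u_1)$ is positive. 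Symmetrically, for the second term the slope is $a_2 = 1/\sqrt{B(1-\alpha_U) h_1}$ and the argument is $u_2 = (\gamma_U - B(1-\alpha_U) h_1)/\sqrt{B(1-\alpha_U) h_1}$, which is strictly positive because $\gamma_U > B(1-\alpha_U)h_1$; so its contribution $a_2^2 u_2 \phi(u_2)$ is also positive. Adding the two positive contributions gives $g''(\gamma_U) > 0$ on the stated interval, which is exactly convexity.

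The argument is essentially a one-line differentiation once the key fact $Q''(x) = x\phi(x)$ is in hand; the only subtlety — and the one place to be careful — is checking the sign of each $Q$-argument, which is why the validity window $B(1-\alpha_U)h_1 < \gamma_U < B\alpha_U h_1$ is imposed in the lemma hypothesis. It is worth noting that this is also why we do not get convexity on all of $\mathbb{R}$: outside this window one of the $Q$-arguments becomes negative and the corresponding second-derivative term turns negative. I would also remark that, having established strict convexity, the optimal $\gamma_U(\alpha_U)$ is the unique stationary point of $g$, found by setting $g'(\gamma_U) = 0$, i.e. equating the two density terms $\frac{1}{\sqrt{B\alpha_U h_1}}\phi(u_1) = \frac{1}{\sqrt{B(1-\alpha_U)h_1}}\phi(u_2)$; this reduces to a quadratic in $\gamma_U$ whose relevant root lies in the validity interval, and it is this closed-form $\gamma_U(\alpha_U)$ that is substituted back to make the subsequent optimization over $\alpha_U$ one-dimensional. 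I do not anticipate a genuine obstacle here; the main thing to get right is bookkeeping of the affine-argument chain rule and the sign checks.
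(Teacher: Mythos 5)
Your proposal is correct and follows essentially the same route as the paper, whose proof is precisely the remark that $\frac{\partial^2 C}{\partial \gamma_U^2} > 0$ on the validity interval; you simply fill in the details by noting that only the first two $Q$-terms of $C$ depend on $\gamma_U$ and that, with affine arguments, each contributes $a^2 u\,\phi(u) > 0$ since both arguments are positive exactly when $B(1-\alpha_U)h_1 < \gamma_U < B\alpha_U h_1$. The sign-checking of the $Q$-arguments and the reduction of the stationarity condition to a (linear-term-free) quadratic, yielding the closed-form $\gamma_U^*(\alpha_U)$, are consistent with the paper's subsequent development.
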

\begin{proof}
	The proof is provided in Appendix \ref{ap:gamma_cvx}.
\end{proof}
Given Lemma \ref{lemma:gamma_convex}, the optimal $\gamma_U$ can be found from the vanishing point of $\frac{\partial C}{\partial \gamma_{U}}$, written as
\begin{equation}
	\gamma^{*}_U(\alpha_{U}) = \sqrt{ \frac{ Bh_1 + \ln(\frac{1-\alpha_{U}}{\alpha_{U}}) - 2B \alpha_{U} h_1 }{ \frac{1}{B \alpha_{U} (1-\alpha_{U})h_1} - \frac{2}{B(1-\alpha)h_1}  }  }.
	\label{eq:optimal_gamma}
\end{equation}
Given that we now have an expression of the optimal $\gamma_{U}$ for each $\alpha_U$, we can now show that $C$ is convex in $\alpha_{U}$ when $\gamma_U = \gamma^{*}_U(\alpha_{U})$.

\begin{theorem}[Convexity in $\alpha_U$]
	Given that the conditions in \eqref{eq:assumptions} is met, $C$ is convex in $\alpha_U$ when $\gamma_U = \gamma^{*}_U(\alpha_{U})$.
	\label{theorem:alpha}
\end{theorem}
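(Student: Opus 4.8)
The plan is to reduce the claim to the convexity of two independent pieces and to handle the harder piece through an envelope (Schur-complement) argument built on Lemma~\ref{lemma:gamma_convex}. Write $C = C_\gamma + C_0$, where $C_\gamma(\alpha_U,\gamma_U)$ collects the two terms of \eqref{eq:cost_function_Q} that contain $\gamma_U$, and $C_0(\alpha_U) := \sum_{i=2}^{\mathcal{K}}\bigl[Q(\kappa_i\sqrt{\alpha_U}) + Q(\kappa_i\sqrt{1-\alpha_U})\bigr]$ with $\kappa_i := \sqrt{B}\,(h_1-h_i)/\sqrt{h_1+h_i}$, obtained by pulling $\sqrt{B\alpha_U}$ (resp.\ $\sqrt{B(1-\alpha_U)}$) out of the numerator and denominator of the last two $Q$-arguments. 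Since $\tilde C(\alpha_U) := C(\alpha_U,\gamma^{*}_U(\alpha_U)) = C_\gamma(\alpha_U,\gamma^{*}_U(\alpha_U)) + C_0(\alpha_U)$ and convexity is closed under addition, it suffices to show each summand is convex in $\alpha_U$ on the interval fixed by \eqref{eq:assumptions}.

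The term $C_0$ is routine: condition three in \eqref{eq:assumptions} gives $\kappa_i > 0$, and a direct computation using $\varphi'(z) = -z\varphi(z)$ (with $\varphi$ the standard normal density) yields $\frac{d^2}{dx^2}Q(\kappa\sqrt{x}) = \tfrac{\kappa}{4}x^{-3/2}\varphi(\kappa\sqrt{x})(\kappa^2 x + 1) > 0$ for $x>0$, so $x\mapsto Q(\kappa_i\sqrt{x})$ is convex; composing with the affine maps $\alpha_U\mapsto\alpha_U$ and $\alpha_U\mapsto 1-\alpha_U$ and summing shows $C_0$ is convex. For $\tilde C_\gamma(\alpha_U) := C_\gamma(\alpha_U,\gamma^{*}_U(\alpha_U))$, I would invoke the envelope structure: by Lemma~\ref{lemma:gamma_convex}, $\gamma^{*}_U(\alpha_U)$ is the interior minimizer of $\gamma_U\mapsto C_\gamma(\alpha_U,\gamma_U)$, so $\partial_{\gamma_U}C_\gamma = 0$ along the curve $\gamma_U = \gamma^{*}_U(\alpha_U)$. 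Differentiating $\tilde C_\gamma$ twice, using the chain rule and implicit differentiation of this first-order condition, collapses to the Schur-complement identity $\tilde C_\gamma''(\alpha_U) = \partial_{\alpha_U}^2 C_\gamma - (\partial_{\alpha_U}\partial_{\gamma_U}C_\gamma)^2/\partial_{\gamma_U}^2 C_\gamma$, all partials evaluated at $(\alpha_U,\gamma^{*}_U(\alpha_U))$. Because $\partial_{\gamma_U}^2 C_\gamma > 0$ there (Lemma~\ref{lemma:gamma_convex}), establishing $\tilde C_\gamma'' \ge 0$ is equivalent to the determinant inequality $\partial_{\alpha_U}^2 C_\gamma \cdot \partial_{\gamma_U}^2 C_\gamma - (\partial_{\alpha_U}\partial_{\gamma_U}C_\gamma)^2 \ge 0$, i.e.\ positive semidefiniteness of the $2\times 2$ Hessian of $C_\gamma$ on the optimal-threshold curve.

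That determinant inequality is the main obstacle. I would compute the three second partials of $C_\gamma$ explicitly (again via $\varphi'(z) = -z\varphi(z)$), substitute $\gamma_U = \gamma^{*}_U(\alpha_U)$ from \eqref{eq:optimal_gamma}, and simplify using the stationarity relation it encodes, most usefully in the form $\varphi(z_1)/\sqrt{\alpha_U} = \varphi(z_2)/\sqrt{1-\alpha_U}$ where $z_1,z_2$ are the arguments of the first and second $Q$-functions in \eqref{eq:cost_function_Q}; this cancellation is what makes the surviving expression a tractable single-variable inequality in $\alpha_U$. I expect the fourth condition $B\alpha_U h_1 - B(1-\alpha_U)h_1 > 3$ to enter here for two reasons: it guarantees that $\gamma^{*}_U(\alpha_U)$ actually lies strictly inside $\bigl(B(1-\alpha_U)h_1,\,B\alpha_U h_1\bigr)$ (so the envelope argument is legitimate and \eqref{eq:optimal_gamma} is real), and it provides exactly the slack needed to force the determinant nonnegative. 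The difficulty is genuine: the cross partial $\partial_{\alpha_U}\partial_{\gamma_U}C_\gamma$ has no fixed sign, each individual $Q$-term in $C_\gamma$ has an indefinite Hessian on the full valid rectangle, so joint convexity of $C_\gamma$ fails there and the argument must be localized to the curve $\gamma_U = \gamma^{*}_U(\alpha_U)$ and rely on the stationarity relation to kill the destabilizing terms. Once $\tilde C_\gamma'' \ge 0$ is verified, $\tilde C = \tilde C_\gamma + C_0$ is a sum of convex functions and the theorem follows.
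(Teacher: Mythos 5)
Your setup is sound and in one respect more careful than it needs to be, but the proof is not complete: the decisive inequality is stated, not established. For the $\gamma$-free part $C_0$ (the $E_i$ and $F_i$ terms of the paper's appendix) your second-derivative computation $\frac{d^2}{dx^2}Q(\kappa\sqrt{x}) = \tfrac{\kappa}{4}x^{-3/2}\varphi(\kappa\sqrt{x})(\kappa^2 x+1)>0$ is correct and settles that half. For $\tilde C_\gamma(\alpha_U)=C_\gamma(\alpha_U,\gamma^{*}_U(\alpha_U))$, however, you reduce everything to the Schur-complement inequality $\partial_{\alpha_U}^2 C_\gamma\,\partial_{\gamma_U}^2 C_\gamma - (\partial_{\alpha_U}\partial_{\gamma_U}C_\gamma)^2 \ge 0$ on the curve $\gamma_U=\gamma^{*}_U(\alpha_U)$ and then explicitly defer its verification (``I would compute\ldots'', ``I expect the fourth condition to enter here''). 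That inequality is the entire content of the theorem for the two $\gamma$-dependent terms; without carrying out the computation and showing where the condition $B\alpha_U h_1 - B(1-\alpha_U)h_1>3$ forces the determinant nonnegative, the convexity of $\tilde C_\gamma$ is unproven. A correct reduction to an unverified inequality is a gap, not a proof.

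It is worth noting how the paper avoids this obstacle entirely: it argues termwise, showing each of the two $\gamma$-dependent summands $G$ and $H$ has positive second derivative in $\alpha_U$ with $\gamma_U$ held fixed (then evaluated at $\gamma^{*}_U$), using $\gamma_U < B\alpha_U h_1$ to sign the bracket for $G$ and $\gamma^{*}_U > B(1-\alpha_U)h_1$ together with the fourth condition of \eqref{eq:assumptions} for $H$. That route never needs the cross partial or the Hessian determinant. The trade-off is that the paper's slicewise statement does not account for the $\alpha_U$-dependence of $\gamma^{*}_U$ through the chain rule, whereas your envelope formulation is the one that literally justifies a one-dimensional convex search over $\alpha_U\mapsto C(\alpha_U,\gamma^{*}_U(\alpha_U))$; but to claim the theorem you must either finish the determinant inequality or fall back to the paper's simpler termwise second-derivative computations.
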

\begin{proof}[Proof]
    The proof of Theorem \ref{theorem:alpha} is given in Appendix \ref{ap:alpha_cvx}.
\end{proof}

\par At this point, for the no-ISI scenario, we have found $\gamma^{*}_U$ in closed form as a function of $\alpha_U$, and shown the union bound cost function $C$ to be convex in $\alpha_U$ when operating at $\gamma_{U} = \gamma^{*}_U(\alpha_{U})$. Therefore, $C$ can be minimized using simple $1$-dimensional numerical search algorithms, and $\alpha^*$ can be found. Note that performing this operation only once (before the data transmission starts) is sufficient. It is also noteworthy that $\gamma_U$ is simply a dummy variable in this derivation. The obtained $\alpha^*$ is employed to determine $\gamma$ in the following.

\subsection{Selecting $\gamma$}
\label{subsec:gamma}

\par In Subsection \ref{subsec:alpha}, a hypothetical no-ISI scenario is considered to optimize $\alpha$. However, the same approach cannot be taken for $\gamma$, since completely ignoring earlier symbol transmissions causes the obtained $\gamma$ to be considerably smaller than the actual optimal value. Motivated by this shortcoming, we propose a low complexity, sub-optimal method for estimating the $\gamma$ value in the actual, with-ISI scenario. The proposed approach works as follows:
\begin{enumerate}
	\item The \textit{worst-case symbol sequences} for each point on the constellation diagram in terms of ISI are considered. 
	\begin{itemize}
		\item For the MCPM constellation points where the concentration bit is a $0$ (\textit{i.e.}, the bottom row), the worst-case sequence is the one causing the highest ISI. The highest ISI is generated when all the past symbols are transmitted at the $\mathcal{K}^{th}$ sub-slot and with the high concentration. For a symbol memory of $L_s$ and using $4$-MCPM, this sequence corresponds to a ($L_s - 1$) symbols-long repeated transmission of $W_0$ in Figure \ref{fig:circled}. Note that this is similar to the ``...-1-1-1-1-1-0" case in pure BCSK.
		\item Similarly, for the MCPM constellation points where the concentration bit is a $1$ (top row), the worst-case sequence is the one causing the lowest ISI, since ISI would \textit{help} the correct detection of the concentration bit otherwise \cite{worst_seqs}. For a symbol memory consideration of $L_s$ and using $4$-MCPM, this sequence corresponds to a ($L_s - 1$) symbols-long repeated transmission of $W_1$ in Figure \ref{fig:circled}. Note that this is similar to the ``...-0-0-0-0-0-1" case in pure BCSK.
	\end{itemize}
	Let $\mu^{w}_{i,j}$ be the conditional arrival mean of the constellation point located at the $i^{th}$ PPM bin and has the BCSK bit $j$, under the corresponding worst-case sequence considerations. Here, $i \in \{1,\dots,\mathcal{K}\}$ and $j \in \{0,1\}$. These conditional arrival means can be written as
	\begin{equation} \label{eq:worstcasemean}
	\hspace{-0.55cm} \mu^{w}_{i,j} = 
	\begin{cases} 
	B \alpha^* h_1 + \sum_{c = 1}^{L_s-1} B (1-\alpha^*)h_{c\mathcal{K}+i}  & j = 1, \\
	B (1-\alpha^*) h_1 + \sum_{c = 1}^{L_s-1} B \alpha^* h_{(c-1)\mathcal{K}+i+1}  & j = 0. \\
	\end{cases}
    \end{equation}
	
	\item After finding the conditional statistics, the Gaussian approximation of the Poisson distribution is again employed, in order to find an individual threshold between the upper and lower-row constellations of the $i^{th}$ PPM bin, which we call $\gamma^{w}_i$. Specifically, $\gamma^{w}_i$ locates at the point where the upper ($\sim \mathcal{N}(\mu^{w}_{i,1},\mu^{w}_{i,1})$) and lower ($\sim \mathcal{N}(\mu^{w}_{i,0},\mu^{w}_{i,0})$) constellations' densities intersect. Thus, $\gamma^{w}_i$ is the solution to the equation 
    \begin{equation}
    \label{eq:gaus_cross}
    \frac{1}{\sqrt{2 \pi \mu^{w}_{i,1}}} e^{\frac{(\gamma^{w}_i - \mu^{w}_{i,1})^2}{2\mu^{w}_{i,1}}} = \frac{1}{\sqrt{2 \pi \mu^{w}_{i,0}}} e^{\frac{(\gamma^{w}_i - \mu^{w}_{i,0})^2}{2\mu^{w}_{i,0}}}.
    \end{equation}
    
    \item Lastly, the obtained candidate $\gamma^{w}_i$ values are averaged to find the estimated $\gamma$ as
\begin{equation}
	\label{eq:gamma_averaging}
	\gamma^{*} = \Biggl\lfloor \frac{1}{\mathcal{K}} \sum_{i = 1}^{\mathcal{K}} \gamma^{w}_i \Biggr\rfloor + \frac{1}{2}.
\end{equation}
The floor function operation in \eqref{eq:gamma_averaging} is done since in the actual case, the arrival counts are discrete and Poisson distributed random variables.	
\end{enumerate}

\section{Numerical Results}
\label{sec:results}

\par Herein, numerical BER results of the proposed scheme are presented under different channel conditions and detectors through computer simulations, on the LTI-Poisson channel model described in Section \ref{sec:systemmodel},\footnote{Note that in both Equation \eqref{eq:PPM_det} and $\alpha$ optimization, we had assumed $h_1 > \max(h_2,\dots,h_L)$. Though it is generally satisfied, this assumption is not imposed on any evaluated system in this paper. The channel coefficients are generated using Equations \eqref{eq:arrival_pdf}-\eqref{eq:FIRcoefs}, according to the system parameters\color{black}.} and using the parameters as presented in Table \ref{tab:simtable}.\footnote{The chosen value for $D$ is reported to be a conservative value for the diffusion coefficient of insulin in water at 20$\degree$C \cite{insulin_D}.} Since MCPM is a combination of BCSK and $\mathcal{K}$-ary PPM, the proposed scheme is compared to these modulation schemes in order to present the gain it has over its building blocks. In addition, unless specified, all $(\alpha,\gamma)$ pairs of the MCPM schemes in the section are obtained via exhaustive search and using TPCD, and thus presumed to be the optimal values. BCSK is demodulated using a numerically optimized threshold detector similar to Equation \eqref{eq:CSK_det}, and PPM schemes are demodulated using the maximum count decoder similar to Equation \eqref{eq:PPM_det}. All figures herein employ the normalizations presented in Table \ref{tab:paramtable}.

\begin{table}[!t]
\centering
\caption{Considered system and channel parameters. Default values shown in bold.}
\label{tab:simtable}
\begin{tabular}{ll}
\hline
Parameter                 & Value                             \\ \hline
$M$ (molecules)         & $10, 30, \textbf{50}, 70, 90, 110$  \\
$t_b$ $(s)$              & $0.12, 0.18, 0.24, \textbf{0.30}, 0.36, 0.42, 0.48$    \\
$r_0$ $(\SI{}{\micro\meter})$    & $10$                              \\
$r_r$ $(\SI{}{\micro\meter})$         & $5$    \\
$D$ $(\SI{}{\micro\meter\squared\per\second})$  & $79.4$   \\ \hline                        
\end{tabular}
\end{table}

\subsection{Performance Evaluation of the $\alpha$ and $\gamma$ Optimizations}
\label{subsec:alphagamma_eval}

\par Both $\alpha$ and $\gamma$ selection methods are sub-optimal procedures due to their simplifying considerations. Thus, this subsection provides numerical results regarding the accuracy of the proposed methods to select the $(\alpha,\gamma)$ pair. Figure \ref{fig:alphagammatable_tb300ms} presents the theoretical and optimal $(\alpha,\gamma)$ pairs for a set of channel parameters. Throughout the subsection, the theoretical curves correspond to calculating both $\alpha$ and $\gamma$ using methods described in Section \ref{sec:alphagamma}, whereas the curves using true optimal values use ($\alpha$,$\gamma$) pairs are calculated using exhaustive search (ES).

\begin{figure}[!t]
	\centering
	\includegraphics[width=0.48\textwidth]{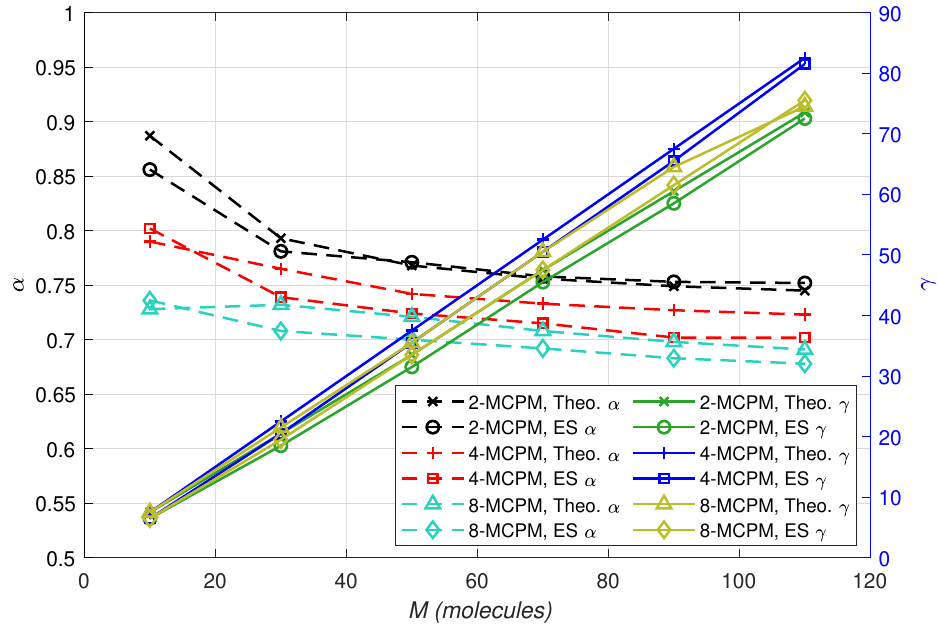} 
	\caption{ $\alpha$ and $\gamma$ vs. $M$ for MCPM, using theoretical and simulated values. $t_b = \SI{0.30}{\second}$, $r_0 = \SI{10}{\micro\meter}$, $r_r = \SI{5}{\micro\meter}$, $D = \SI{79.4}{\micro\meter\squared\per\second}$, and $t_{\textrm{total}} = 48t_b$.} 
	\label{fig:alphagammatable_tb300ms}
\end{figure}

\par Overall, the results of Figure \ref{fig:alphagammatable_tb300ms} suggest that the results of our optimizations follow the trends of the true $(\alpha,\gamma)$ pairs. That being said, the effect of the slight discrepancies need to be addressed to assess the sensitivity of the error performance with respect to $\alpha$ and $\gamma$. In order to evaluate the error performance achieved by the optimization approaches presented in Subsections \ref{subsec:alpha} and \ref{subsec:gamma}, two sets of results are presented. Firstly, Figure \ref{fig:M_theosim_tb300} is presented to evaluate the performance with respect to $M$, using the $(\alpha,\gamma)$ pairs obtained for Figure \ref{fig:alphagammatable_tb300ms}. In addition, Figure \ref{fig:tb_theosim_M50} is presented to evaluate the effect of $t_b$ on the performance. 

\begin{figure}[!t]
	\centering
	\includegraphics[width=0.48\textwidth]{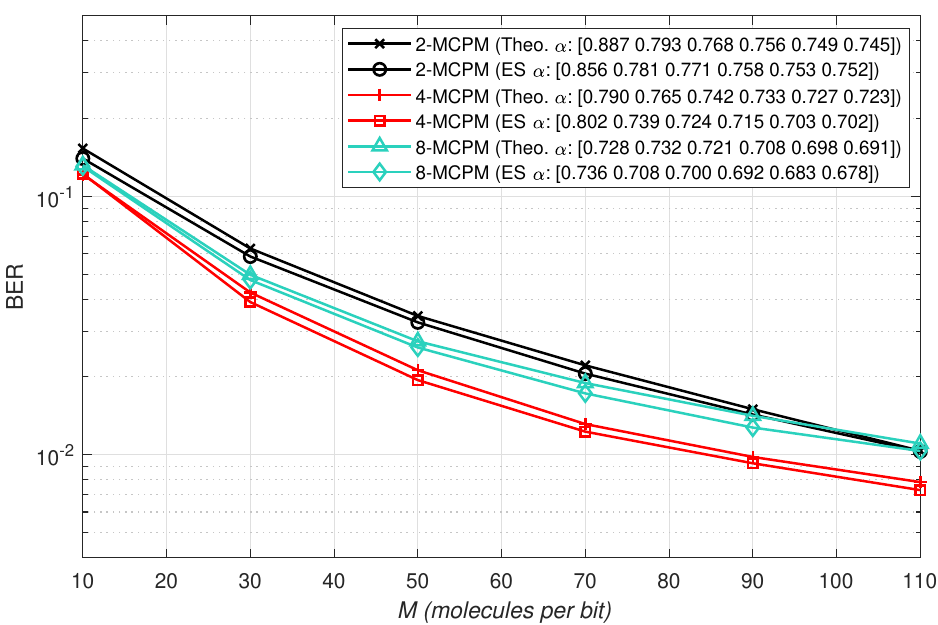} 
	\caption{ BER vs. $M$ curves for MCPM. $t_b = \SI{0.30}{\second}$, $r_0 = \SI{10}{\micro\meter}$, $r_r = \SI{5}{\micro\meter}$, $D = \SI{79.4}{\micro\meter\squared\per\second}$, and $t_{\textrm{total}} = 48t_b$.} 
	\label{fig:M_theosim_tb300}
\end{figure}

\begin{figure}[!t]
	\centering
	\includegraphics[width=0.48\textwidth]{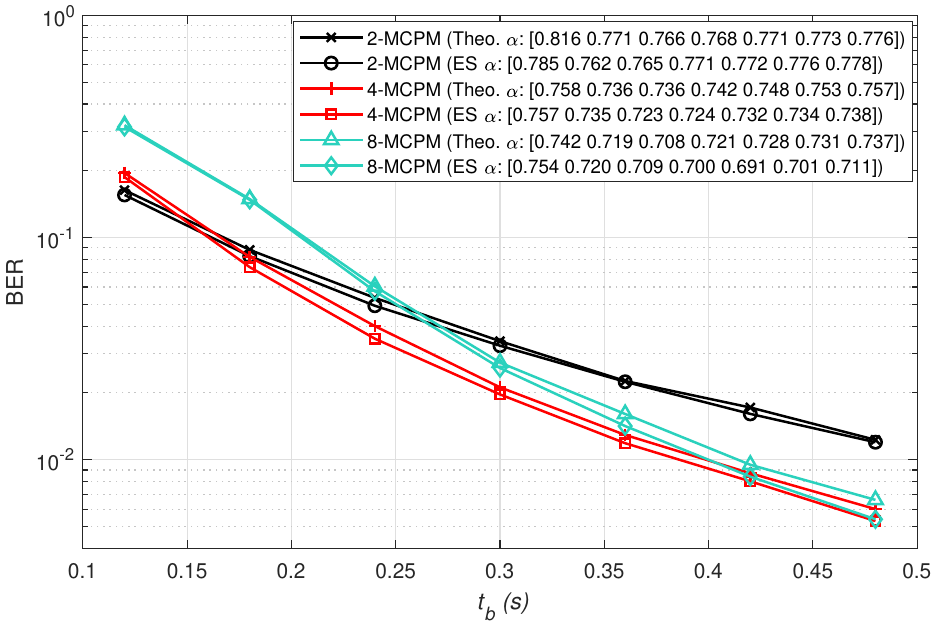} 
	\caption{ BER vs. $t_b$ curves for MCPM. $M = 50$ molecules, $r_0 = \SI{10}{\micro\meter}$, $r_r = \SI{5}{\micro\meter}$, $D = \SI{79.4}{\micro\meter\squared\per\second}$, and $t_{\textrm{total}} = 48t_b$.} 	
	\label{fig:tb_theosim_M50}
\end{figure}

\par The numerical results of Figures \ref{fig:M_theosim_tb300} and \ref{fig:tb_theosim_M50} suggest that the BER obtained by employing the $(\alpha,\gamma)$ pairs found via our optimization strategies closely approximate the BER obtained through numerical exhaustive searched $(\alpha,\gamma)$ pairs. Note that Figure \ref{fig:tb_theosim_M50} shows that our method is especially effective when the bit duration $t_b$ is smaller, which is desirable as the high bit-rate regime is indeed our operation regime of interest. That being said, a slight performance loss is incurred at larger $t_b$ values (lower data-rate). 

\par Our empirical observations suggest that the accuracy in estimating the optimal $\alpha$ does not considerably increase in the absolute error sense as $t_b$ increases. However, we observed that the sensitivity of the error performance to $\alpha$ increases with $t_b$, which partially causes the slight performance \color{black} loss incurred by the sub-optimal methods. One interesting finding is that even though a larger $t_b$ implies the accuracy of the no-ISI assumption to improve, the absolute error between the optimal and estimated $\alpha$ does not monotonically decrease with $t_b$. This behavior is due to the sub-optimality introduced in the union bound in Equation \eqref{eq:cost_function_proto}, as the bounding results in counting some decision regions multiple times \cite{molisch_book}. 

\par In addition to the aforementioned discussion, a portion of the performance gap can be explained due to the approximations done in the $\gamma$ optimization. As $t_b$ increases, the incurred ISI decreases for all symbol sequences, including the worst-case sequences for the MCPM scheme. This, in turn, decreases the ISI contribution of conditional arrival statistics. In fact, this decrease leads some candidate $\gamma^{w}_i$ values to undershoot, causing the obtained $\gamma^{*}$ from Equation \eqref{eq:gamma_averaging} to be smaller than the actual value. Overall, emphasizing on the goals of establishing high data-rate communication (small $t_b$) with low power consumption (small $M$) and low computational complexity, we believe that despite their shortcomings in the large $t_b$ regime, the proposed sub-optimal methods for determining $\alpha$ and $\gamma$ still have utility in the regime of interest.

\subsection{Error Performance and Transmission Power}
\label{subsec:Msweep}

\par In a nano-scale MCD link, energy consumption is an important design criterion. Acknowledging the presence of the relation between the consumed energy and the number of emitted molecules \cite{energy_model}, this subsection presents the error performance of MCPM with respect to $M$. Here, Figure \ref{fig:M_sweep_tb180} considers a high ISI scenario given the channel parameters, whereas Figure \ref{fig:M_sweep_tb300} has a more benign channel. 

\par The results of Figure \ref{fig:M_sweep_tb180} suggest that under high ISI, the proposed scheme outperforms its concentration and position counterparts. The main reason for this desirable gain is the ability of MCPM to encode more bits within its constellations, effectively mitigating ISI by having a longer symbol duration while still satisfying the same bit-rate constraint ($\frac{1}{t_b}$). One interesting observation is that when using the TPCD, although increasing $\mathcal{K}$ corresponds to having a sparser transmission in the temporal axis, the error performance of $\mathcal{K}$-MCPM does not monotonically improve with $\mathcal{K}$. We note this phenomenon is present despite the fact that higher $\mathcal{K}$ also allows emissions with larger numbers of molecules (see Table \ref{tab:paramtable}). Since ISI is considerably high in Figure \ref{fig:M_sweep_tb180}, further dividing the already short symbol duration into a large number of temporal sub-slots exacerbates the ISI issue between them. Thus, it can be inferred that the MCPM order $\mathcal{K}$ inherently governs a trade-off between the observed ISI and the transmission (hence also received) power.  Overall, given the considered system parameter set, $4$-MCPM is found to yield the lowest error probabilities among the evaluated schemes.

\par The results of Figure \ref{fig:M_sweep_tb300} suggest that MCPM still has an error performance improvement in the milder ISI regime, though the gain is less pronounced than the higher bit-rate scenario in Figure \ref{fig:M_sweep_tb180}. It can also be observed that as $M$ increases, pure PPM starts to become the more desirable choice than the proposed scheme, whilst in the lower transmission power range (small $M$), MCPM schemes outperform BCSK and PPM. Since MCPM schemes can encode more bits in a single symbol, they can emit more molecules for each symbol under the $M$ normalization, as also presented in Table \ref{tab:paramtable}. This property is especially beneficial when $M$ is smaller, as it helps MCPM schemes avoid extremely low emission intensities better than its competitors in this regime.

\begin{figure}[!t]
	\centering
	\includegraphics[width=0.48\textwidth]{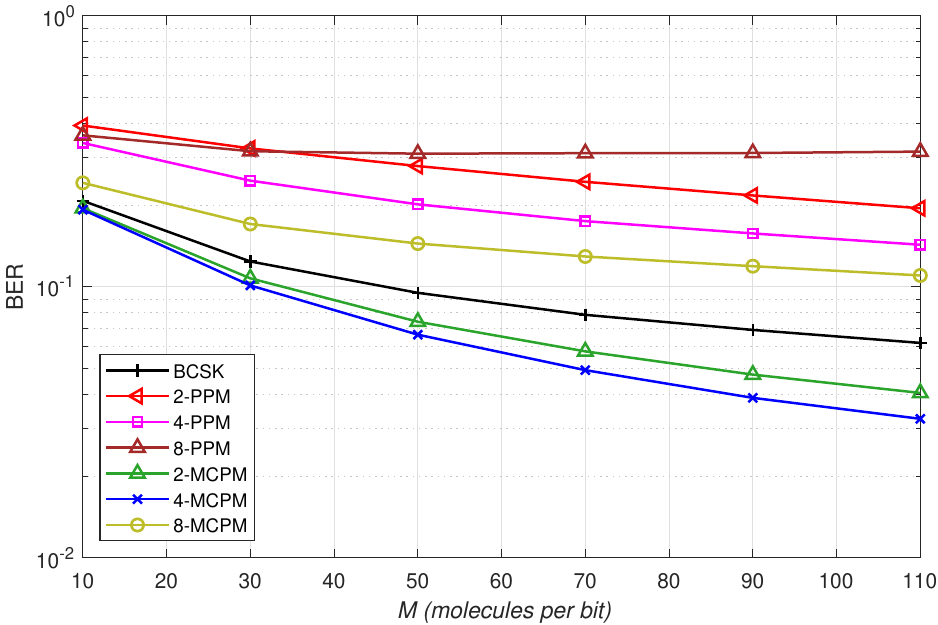} 
	\caption{BER vs. $M$ curves for MCPM and competing schemes. $t_b = \SI{0.18}{\second}$, $r_0 = \SI{10}{\micro\meter}$, $r_r = \SI{5}{\micro\meter}$, $D = \SI{79.4}{\micro\meter\squared\per\second}$, and $t_{\textrm{total}} = 48t_b$.}	
	\label{fig:M_sweep_tb180}
\end{figure}

\begin{figure}[!t]
	\centering
	\includegraphics[width=0.48\textwidth]{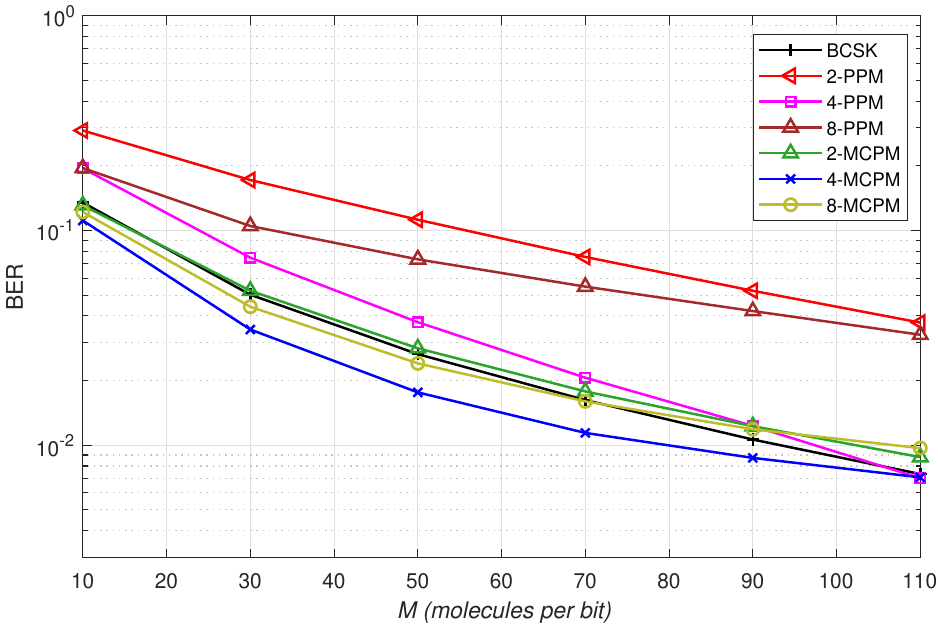} 
	\caption{BER vs. $M$ curves for MCPM and competing schemes. $t_b = \SI{0.30}{\second}$, $r_0 = \SI{10}{\micro\meter}$, $r_r = \SI{5}{\micro\meter}$, $D = \SI{79.4}{\micro\meter\squared\per\second}$, and $t_{\textrm{total}} = 48t_b$.}	
	\label{fig:M_sweep_tb300}
\end{figure}

\subsection{Error Performance and Bit-Rate}

\par The results and discussion presented in Subsection \ref{subsec:Msweep} suggest that MCPM compares to its concentration and position components differently under different amounts of ISI. Since the ISI in an MCD system depends on the relationship between the topological parameters and the symbol duration, Figure \ref{fig:tb_sweep} is presented to evaluate the error performance of MCPM with respect to $t_b$. 

\begin{figure}[!t]
	\centering
	\includegraphics[width=0.48\textwidth]{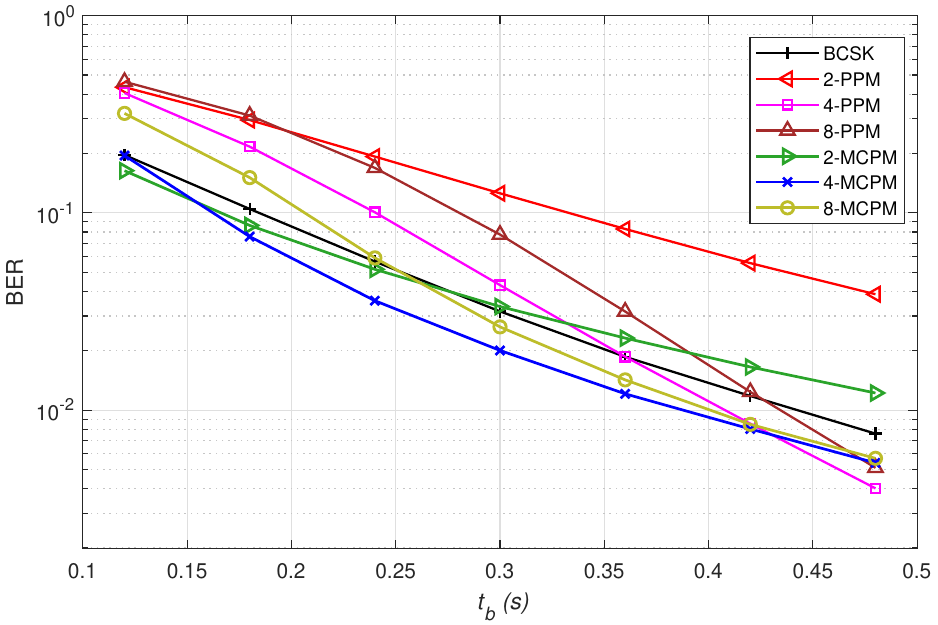} 
	\caption{BER vs. $t_b$ curves for MCPM and competing schemes. $M = 50$ molecules, $r_0 = \SI{10}{\micro\meter}$, $r_r = \SI{5}{\micro\meter}$, $D = \SI{79.4}{\micro\meter\squared\per\second}$, and $t_{\textrm{total}} = \SI{30}{\second}$.}	
	\label{fig:tb_sweep}
\end{figure}

\par The results of Figure \ref{fig:tb_sweep} show that even though pure PPM is a better strategy for larger $t_b$ scenarios, at least one order of MCPM outperforms the existing schemes in high bit-rate scenarios. Combined with the results presented in Subsection \ref{subsec:Msweep}, MCPM is found to be especially beneficial when the bit-rate is relatively high (small $t_b$) and the transmission power is relatively low (small $M$), which suggests possible applications for MCD under these channel conditions and with simple nano-scale machinery. Furthermore, noting that the results presented in Figures \ref{fig:M_theosim_tb300}-\ref{fig:tb_theosim_M50} suggest the proposed $\alpha$ and $\gamma$ selection techniques are accurate in the mentioned regime, it follows that within the parameter regimes where MCPM is beneficial, the presented numerical results for MCPM can be closely approximated using the proposed methods.

\subsection{Error Performance under Imperfect Synchronization}

\par Although perfect synchronization between the transmitter and the receiver was assumed until this point, this assumption may not always hold in an MCD link \cite{sync_error1}. In this section, the robustness and general behavior of MCPM under mis-synchronization are investigated. Specifically, the scenario in which the receiver's clock lags behind the transmitter's clock is examined, where the parameter $\tau$ denotes the clock offset between them. The $(\alpha,\gamma)$ pairs for the MCPM schemes, alongside the threshold values for BCSK are numerically optimized for each $\tau$. Using these definitions and considerations, Figure \ref{fig:tau_sweep} presents the obtained BER versus $\tau$ curves.

\par The results of Figure \ref{fig:tau_sweep} imply that in the very high $\tau$ regime (large synchronization error), the MCPM and PPM schemes with higher orders outperform their lower order counterparts. This phenomenon is mainly due to allowing a larger symbol duration while still satisfying the same $t_b$ constraint, since having a larger $t_{sym}$ makes $\tau$ smaller with respect to the symbol duration. This effect also explains the phenomenon of MCPM outperforming pure PPM with the same order, since the concentration dimension allows to encode more bits into a single symbol than pure PPM. 

\begin{figure}[!t]
	\centering
	\includegraphics[width=0.48\textwidth]{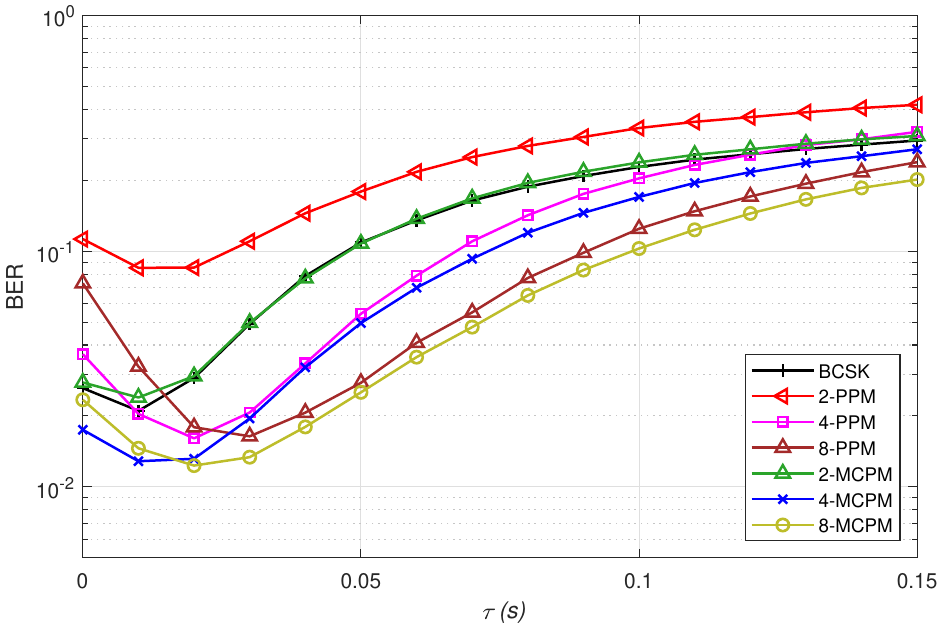} 
	\caption{BER vs. $\tau$ curves for MCPM and competing schemes. $M = 50$ molecules, $t_b = \SI{0.30}{\second}$, $r_0 = \SI{10}{\micro\meter}$, $r_r = \SI{5}{\micro\meter}$, $D = \SI{79.4}{\micro\meter\squared\per\second}$, $t_{\textrm{total}} = 48t_b$.}	
	\label{fig:tau_sweep}
\end{figure}

\par An interesting observation in Figure \ref{fig:tau_sweep} is that the error performance does monotonically deteriorate as $\tau$ increases. This phenomenon has been documented in the literature for BCSK, in the context of deliberately adding a reception delay to improve error performance \cite{reception_delay}. The results of Figure \ref{fig:tau_sweep} suggest that the BER improvement introduced by a reception delay is also applicable to timing-based modulation schemes like MCPM and PPM as well. The main reason for this beneficial phenomenon is linked to the behavior of the arrival time distribution. There is a non-negligible delay between the emission instant and the channel peak time \cite{3Dchar}, which implies that within this time interval, ISI contributes to the received signal more than the intended symbol does. Having a $\tau$ delay in the reception window can help the receiver to mitigate ISI (at the cost of reducing the received signal power) and be beneficial. Of course, further increasing $\tau$ causes losing the intended symbol's contribution as well, resulting in a poorer error performance.

\section{Conclusions}
\label{sec:conclusion}

\par In this paper, the problem of MCD modulation design using a single type of molecules has been addressed. To this end, a hybrid modulation scheme, \textit{molecular concentration position modulation} (MCPM), has been proposed. The MLSD for the proposed modulation has been derived. Motivated by the high computational complexity of MLSD, a reduced complexity sub-optimal detector (\textit{i.e.}, TPCD) has been introduced. For an MCPM scheme utilizing TPCD, the problem of constellation point design has been addressed through the parameter $\alpha$. In order to optimize $\alpha$, a cost function has been proposed and was shown to be convex in $\alpha$. Furthermore, a low-complexity sub-optimal strategy to select the threshold used in TPCD was presented. Overall, our numerical results suggest that MCPM schemes outperform BCSK and PPM especially in the high bit-rate and low transmission power regime, and provide better robustness against synchronization errors between the transmitter and the receiver.

\appendices
\section{Proof of Lemma 1}
\label{ap:gamma_cvx}

\par The proof simply follows from showing that $\frac{\partial^2 C}{\partial \gamma_{U}^2} > 0$ holds within the valid region. The second derivative is found as
\begin{equation}
    \begin{split}
    	\frac{\partial^2 C}{\partial \gamma_{U}^2} = \frac{1}{\sqrt{2 \pi}} \Bigg( &\frac{\gamma_{U} - B(1-\alpha_{U})h_1}{(B(1-\alpha_{U})h_1)^{3/2}} e^{-\frac{(\gamma_{U} - B(1-\alpha_{U})h_1)^2}{2B(1-\alpha_{U})h_1}} \\ & \hspace{1cm} - \frac{\gamma_{U} - B\alpha_{U}h_1}{(B \alpha_{U} h_1)^{3/2}} e^{-\frac{(\gamma_{U} - B \alpha_{U}h_1)^2}{2B \alpha_{U} h_1}}  \Bigg).        
    \end{split}
    \label{eq:gamma_cvx_prf}
\end{equation}
Note that the exponential expressions are always positive, $(B\alpha_{U}h_1)$ and $B (1-\alpha_{U})h_1$ are positive by definition, $\gamma_{U} - B(1-\alpha_{U})h_1$ is positive and $\gamma_{U} - B \alpha_{U} h_1$ is negative due to the defined valid region of $\gamma_{U}$ (first assumption in Equation \eqref{eq:assumptions}). Overall, \eqref{eq:gamma_cvx_prf} involves subtracting a negative quantity from a positive, thus we have $\frac{\partial^2 C}{\partial \gamma_{U}^2} > 0$. 

\section{Proof of Theorem 1}
\label{ap:alpha_cvx}

\par $C$ has four different ``types" of $Q$-function terms in it as summands, with a total of $2\mathcal{K}$ separate $Q$-functions. By plugging in $\gamma_{U} = \gamma^{*}_U(\alpha_{U})$ from \eqref{eq:optimal_gamma}, each of these $Q$-functions can be shown to be convex in $\alpha_{U}$.

\begin{itemize}
    \item Let $E_i = Q\bigg( \frac{B \alpha_{U} (h_1 - h_i)}{ \sqrt{B \alpha_{U} (h_1 +h_i)}} \bigg)$: The convexity of this part follows from the fact that their derivative is an increasing function of $\alpha_{U}$. Given the constraints in \eqref{eq:assumptions} are met,
		\begin{equation}
			\frac{\partial E_i}{\partial \alpha_{U}} = - \frac{B(h_1 - h_i)e^{-\frac{B\alpha_{U}(h_1-h_i)^2}{2(h_1+h_i)}}}{\sqrt{8 \pi B \alpha_U (h_1+h_i)}} = -K_1 \frac{e^{-k_1 \alpha_{U}}}{\sqrt{\alpha_{U}}}
		\end{equation}
		where $K_1$ and $k_1$ are positive coefficients. $e^{-k_1 \alpha_{U}}$ is a decreasing function of $\alpha_{U}$, whereas $\sqrt{\alpha_{U}}$ is an increasing function of it. The division of a decreasing and an increasing function is decreasing, and the minus sign makes the whole expression increasing.
		
		\item Let $F_i = Q\bigg( \frac{B (1-\alpha_{U}) (h_1 - h_i)}{ \sqrt{B (1-\alpha_{U}) (h_1 +h_i)}} \bigg)$: The convexity of this part is proven in a very similar manner to $E_i$, by showing that $\frac{\partial F_i}{\partial \alpha_{U}}$ is an increasing function of $\alpha_{U}$.
		
		\item Let $G = Q\bigg( \frac{B\alpha_{U} h_1 - \gamma_U}{ \sqrt{B \alpha_{U} h_1}} \bigg)$: The proof of this expression follows from its second derivative being always positive. The second derivative can be written as:
		\begin{equation}
		\label{eq:alphaconvex_Gi}
		\begin{split}
		    &\frac{\partial^2 G}{\partial \alpha_{U}^2} = \frac{e^{-\frac{(B\alpha_Uh_1 - \gamma_U)^2}{2B\alpha_Uh_1}}}{\sqrt{32\pi} \alpha_U^2 (B\alpha_Uh_1)^{\frac{3}{2}} } \times \\
		    & \Big( -\gamma_U^3 - (B\alpha_Uh_1)\gamma_U^2 + (B\alpha_U h_1)^3 \\
		    &+ (B\alpha_U h_1)^2 + (B\alpha_U h_1)^2\gamma_U + 3(B\alpha_U h_1) \Big).
		\end{split}
		\end{equation}
		The fraction in the first row of \eqref{eq:alphaconvex_Gi} is always positive, hence it is sufficient to show the positivity of the expression within the parentheses. Since the assumptions in \eqref{eq:assumptions} imply $\gamma_U < B\alpha_U h_1$, the inequalities $(B\alpha_U h_1)^3>(B\alpha_Uh_1)\gamma_U^2$ and $(B\alpha_U h_1)^2\gamma_U > \gamma_U^3$ hold and said expression is also guaranteed to be positive, making $\frac{\partial^2 G}{\partial \alpha_{U}^2} > 0$.

		\item Let $H = Q\bigg( \frac{B(1-\alpha_{U}) h_1 - \gamma_{U}}{ \sqrt{B (1-\alpha) h_1}} \bigg)$: Similar to $G$, the proof of this expression follows from $\frac{\partial^2 H}{\partial \alpha_{U}^2} > 0$. Unfortunately, unlike the earlier $Q$-function types, $H$ is not always convex in $\alpha_U$ for all valid $\gamma_U$. However, given the conditions in \eqref{eq:assumptions} are met, $H$ is indeed convex in $\alpha_U$ when $\gamma_U = \gamma^{*}_U(\alpha_{U})$. The second derivative evaluated at $\gamma^{*}_U$ can be written as
		\begin{equation}
		\label{eq:alphaconvex_Hi}
		\begin{split}
			\frac{\partial^2 H}{\partial \alpha_{U}^2} \Bigg|_{\gamma_{U} = \gamma^{*}_{U}} \hspace{-0.5cm} &= \frac{e^{- \frac{(\gamma^{*}_{U} - B\alpha_U h_1)^2}{2B(1-\alpha_U)h_1}}}{(1-\alpha_{U})^2 (2 \alpha_{U}-1) \sqrt{32\pi B(1-\alpha_{U}) h_1}} \times \\
			 \Bigg[ (1-&2\alpha_{U})  \Big( B^2 h_1^2(2\alpha_{U}^2-3\alpha_U+1) + \\
			B&(1-\alpha_{U})h_1 + Bh_1 (1-2\alpha_{U})\gamma^{*}_{U} +3\gamma^{*}_{U} \Big) \\ 
			&- \alpha_{U} \Big( B(1-\alpha_{U})h_1 + \gamma^{*}_{U} \Big) \ln\big(\frac{1-\alpha_{U}}{\alpha_{U}}\big) \Bigg].
		\end{split}
		\end{equation}
        Similar to $G$, the fraction in the first row of \eqref{eq:alphaconvex_Hi} is always positive. Thus, showing that the expression in the square brackets is positive is sufficient to ensure the convexity of $H$. Rearranging the terms, the positivity of the said part is guaranteed when the following inequality holds:
		\begin{equation}
    		\gamma^{*}_{U} > B(1-\alpha_{U})h_1 \frac{1 - Bh_1 (2\alpha_{U} -1)}{Bh_1 (2\alpha_{U} -1) - 3}.
    		\label{eq:araform}
    	\end{equation}
		By definition, $\gamma^{*}_{U} > B(1-\alpha_{U})h_1$. Thus, ensuring $\frac{1 - Bh_1 (2\alpha_{U} -1)}{Bh_1 (2\alpha_{U} -1) - 3}  \leq 1$ is sufficient for \eqref{eq:araform} to hold. This implies 
			\begin{itemize}
			    \item $B \alpha_{U} h_1 - B (1-\alpha_{U}) h_1 > 3$, or
			    \item $B \alpha_{U} h_1 - B (1-\alpha_{U}) h_1 \leq 2$,
			\end{itemize}
		where the former is within the assumption set in \eqref{eq:assumptions}. 
\end{itemize}
Lastly, since each summand in $C$ is convex given the assumption set \eqref{eq:assumptions} is satisfied, the sum of these convex functions (\textit{i.e.}, $C$) is also convex.

\bibliographystyle{IEEEtran}
\bibliography{refs_new}

\end{document}